\documentclass[double]{article}
\usepackage{times}
\usepackage{algorithm}
\usepackage{algorithmic}
\usepackage{wrapfig}
\usepackage{verbatim}
\usepackage{amsmath}
\usepackage{graphicx}
\usepackage{subcaption}
\usepackage[active]{srcltx}
\usepackage{color}
\usepackage{amsthm}
\usepackage{lineno}
\usepackage{dirtytalk}
\usepackage{amsthm}
\usepackage{authblk}
\usepackage{listings}
\usepackage{tikz}
\usepackage{longtable}
\usetikzlibrary{shapes.geometric, arrows}

\newtheorem{proposition}{Proposition}

\usepackage{epsfig,graphicx,amsfonts}
\usepackage{amsmath,amsthm,amssymb}
\usepackage{xcolor}

\usepackage{adjustbox}
\usepackage{multirow}
\usepackage{setspace}
\usepackage{hyperref}
\usepackage{comment}

\long\def\comment #1\commentend{}

\begin{document}

\title{\Large The Simplicity Paradox: Why Evolution Does Not Produce Universally Complex Agents}

\author{Teddy Lazebnik$^{1,2,*}$\\ \(^1\) Department of Information Systems, University of Haifa, Haifa, Israel \\ \(^2\) Department of Computing, Jonkoping University, Jonkoping, Sweden \\ \(^*\) Corresponding author: \url{teddy.lazebnik@ju.se} }

\date{ }

\maketitle 

\begin{abstract}
\noindent
It has been well established that information improves decisions, pushing the population forward as more information becomes available. Nevertheless, a wide range of empirical evidence shows that humans avoid complexity, delegate judgement, and prefer simplified social worlds. This tension raises an evolutionary puzzle: if knowledge is economically valuable and therefore evolutionarily beneficial, why do populations not converge towards universally informed and complex agents? In this study, we propose a theory of cognitive economy in which information has positive utility but costly acquisition, processing, and coordination. In complex environments, selection can favour heterogeneous populations: most individuals use low-cost heuristics and simplified choice architectures, whereas a minority of agents or institutions specialize in information processing. This cognitive division of labour reduces decision costs while preserving much of the value created by knowledge. We formalize this trade-off by comparing societies of uniformly complex agents with societies containing simpler agents and a specialized decision-making centre. The latter can dominate when the costs saved by distributed simplicity exceed the utility lost through reduced individual autonomy and imperfect delegation. Crucially, the specialized decision-maker need not face a volunteer’s dilemma, because its private payoff can exceed that available under universal complexity through rents, status, control or superior information. The framework links bounded rationality, rational inattention, hierarchy, markets, and cultural evolution, and suggests that simplicity is not a failure of adaptation but a precondition for scalable social organization.
 \\ \\

\noindent
\textbf{Keywords}: Cognitive economy; bounded rationality; information costs; cognitive labour; evolutionary economics.
\end{abstract}

\maketitle \thispagestyle{empty}
\pagestyle{myheadings} \markboth{Draft:  \today}{Draft:  \today}
\setcounter{page}{1}

\section{Introduction}
Information is often treated as a fundamental economic resource because more informative signals can expand the set of attainable decisions and weakly improve expected payoffs under standard decision-theoretic assumptions \cite{Blackwell1953,stiglitz1985information,schiller2024think}. In strategic environments, incomplete and asymmetric information change both the beliefs agents hold and the equilibria they can sustain \cite{Harsanyi1967}. In financial economics, the efficient-market tradition formalizes the idea that prices aggregate available information about assets \cite{Fama1970}. In evolutionary ecology, information about food, predators, mates, and social partners is similarly analysed as a resource that can affect survival and reproductive success \cite{Dall2005}.

The economic value of information, however, is conditional on the cost of acquiring and using it \cite{bloedel2025cost.weller2018does,shami2022economic}. Search theory shows that agents should not gather information indefinitely, but only while the expected marginal benefit of additional information exceeds its marginal cost \cite{Stigler1961}. Rational-inattention models extend this logic by treating attention and information-processing capacity as scarce constraints on choice \cite{Sims2003}. The impossibility of perfectly informationally efficient markets follows from the same principle: if information is costly, informed agents must be compensated, so prices cannot fully reveal all information while leaving no incentive to become informed \cite{GrossmanStiglitz1980}. In large collective decisions, costly information acquisition can also decline at the individual level as the population size increases, producing rational ignorance rather than universal knowledge acquisition \cite{Martinelli2006}.

Human cognition displays the same cost-sensitive structure \cite{gigerenzer2020bounded,geng2022human}. Bounded rationality replaces the fully optimizing agent with an agent whose decisions are constrained by limited information, limited computation, and limited time \cite{Simon1955}. Studies on short-term memory identified sharp limits on immediate information processing capacity \cite{Miller1956,oberauer2018benchmarks}. Behavioural decision research shows that individuals use heuristics under uncertainty rather than computing all probabilities and outcomes explicitly \cite{TverskyKahneman1974}. Heuristics can be understood as effort-reduction devices that reduce search, simplify representation, and decrease computation \cite{ShahOppenheimer2008}. Experimental evidence further indicates that people often avoid cognitively demanding actions even when they are not explicitly aware of the demand manipulation \cite{Kool2010}. Choice-overload experiments show that increasing the number of options can reduce motivation and action, suggesting that more information and more alternatives do not always improve effective choice \cite{IyengarLepper2000}.

These considerations suggest that social organization should not be understood only as the aggregation of individual decisions, but also as a mechanism for allocating cognitive costs. Organization design theory treats uncertainty as an information-processing burden and explains organizational structure as a response to that burden \cite{Galbraith1974}. The theory of the firm as a communication network shows that organizations can be designed to economize the costs of processing and transmitting information among agents \cite{BoltonDewatripont1994}. Knowledge-based hierarchy emerges naturally when common problems can be solved by many workers, but exceptional problems must be routed to specialized problem-solvers \cite{Garicano2000}. The distinction between formal and real authority further shows that decision rights and effective control depend on the distribution of information inside organizations \cite{AghionTirole1997}.

Institutions more broadly can be interpreted as devices that reduce uncertainty and structure repeated interaction \cite{North1990}. Markets perform one version of this function because prices communicate dispersed knowledge without requiring each participant to know the full state of the economy \cite{Hayek1945}. Hierarchies perform another version of this function because centralized influence can reduce scalar stress as group size increases \cite{PerretHartPowers2020}. Division of labour theory adds that specialization is limited not only by market size but also by coordination costs and the distribution of knowledge \cite{BeckerMurphy1992}.

This paper builds on these literatures to propose a theory of cognitive economy. The central puzzle is that information is valuable, yet populations do not converge towards universal cognitive complexity. We argue that this phenomenon is an adaptive response to the costs of information acquisition, processing, and coordination. In complex socioeconomic environments, populations may increase collective utility by allowing most agents to rely on low-cost heuristics while a minority of agents, roles, or institutions specialize in processing complexity. We call these structures decision-compression institutions because they translate complex social states into simpler signals, rules, prices, recommendations, commands or norms.

The proposed framework compares three social configurations: a population of uniformly complex agents, a population of uniformly simple agents, and a heterogeneous population in which simple agents coexist with a specialized decision-making centre. The model captures the trade-off between information benefits, cognitive costs, delegation losses, and institutional rents. It predicts that heterogeneous cognitive organization can dominate universal complexity when the costs saved by widespread simplicity exceed the utility lost through imperfect delegation. It further predicts that the specialized decision-maker need not face a volunteer’s dilemma, because the role can generate private benefits such as rents, status, authority, control or superior information \cite{Diekmann1985}. Therefore, the contribution of the paper is to reframe simplicity as a productive social force. Simplicity is often described as ignorance, passivity, or cognitive limitation. We argue instead that simplicity can be a precondition for scalable social organization. Complex societies do not require all individuals to become complex. They require mechanisms that decide where complexity should be processed, who should bear its costs, and how the outputs of complex processing should be translated into forms that ordinary agents can use.

The remainder of this paper is organized as follows. Section~\ref{sec:rw} reviews the related literature on costly information, bounded rationality, cognitive simplicity, and the emergence of decision-making institutions. Section~\ref{sec:model} introduces the proposed model and formally defines the population structures, utility components, and delegation mechanisms considered in the analysis. Section~\ref{sec:results} presents a numerical investigation of the model, examining the conditions under which heterogeneous cognitive organization and centralized decision-making outperform uniformly simple or uniformly complex populations. Finally, Section~\ref{sec:discussion} discusses the theoretical implications of the results, their relation to evolutionary economics and institutional theory, and possible directions for future research.

\section{Related Work}
\label{sec:rw}
The literature relevant to the proposed framework can be organized around three connected ideas. First, information has economic value because it improves decision quality, but this value is limited by the costs of acquiring, processing, and using information. Second, bounded rationality and rational inattention explain why agents often rely on simplified representations, heuristics, and selective attention rather than full optimization. Third, evolutionary and institutional perspectives show that cognition itself is costly, so populations and organizations may allocate information-processing tasks unevenly across individuals, roles, and institutions.

\subsection{The economic value and cost of information}
\label{subsec:info_value_cost}
Information has economic value because it changes what agents can know, which actions they can justify, and how well decisions can be matched to uncertain states of the world, but this value is always constrained by the costs of acquiring, processing, transmitting, and using information \cite{stiglitz1985information}. In this sense, information is not simply a stock of facts, but an input whose usefulness depends on the social and technical systems that make it interpretable and actionable \cite{schiller2024think}. Thus, more information can improve decisions, yet beyond some point, the marginal value of additional information may be smaller than the marginal cognitive, institutional, or strategic cost of obtaining it \cite{Stigler1961}. This trade-off is especially important in complex environments, where agents face many possible states, limited attention, and costly consequences of mistaken decisions \cite{Sims2003}.

In its core, information improves decision quality by refining the signal on which action is based \cite{Blackwell1953}. Blackwell's comparison of experiments formalized the intuition that a more informative signal structure can only improve expected outcomes for a rational decision-maker who can choose optimally after observing the signal \cite{Blackwell1953}. In strategic settings, information also affects beliefs about other agents, so incomplete information changes the structure of equilibrium and the range of rational behavior \cite{Harsanyi1967}. In financial markets, the value of information appears through prices, because prices partly summarize dispersed beliefs about future returns and risks \cite{Fama1970}. In evolutionary ecology, information about predators, resources, mates, and competitors is similarly valuable because it can alter survival and reproductive decisions \cite{Dall2005}.

On the other hand, information is costly, so agents should not acquire all available knowledge \cite{GrossmanStiglitz1980}. Search theory shows that rational agents stop searching when the expected improvement from another observation is lower than the cost of obtaining it \cite{Stigler1961}. Rational-inattention models extend this logic by treating attention itself as scarce, meaning that agents optimally choose simplified representations of the world rather than processing all signals with equal precision \cite{Sims2003}. The Grossman--Stiglitz paradox shows that perfectly informative prices cannot exist in equilibrium when information acquisition is costly, because informed agents must receive some compensation for becoming informed \cite{GrossmanStiglitz1980}. In collective choice, costly information can produce rational ignorance, because a single individual's probability of changing the outcome is often too small to justify the private cost of becoming fully informed \cite{Martinelli2006}.

Importantly, the cost of information is not only monetary but also cognitive and biological \cite{Simon1955}. Bounded rationality argues that real agents satisfice because they face limits on time, memory, and computation rather than unlimited optimizing capacity \cite{Simon1955}. Empirical work on short-term memory shows that human information processing has sharp capacity constraints, making simplification a structural feature of cognition rather than merely a behavioral mistake \cite{Miller1956}. The heuristics-and-biases tradition shows that people often use simplified judgment rules under uncertainty, sometimes efficiently and sometimes with systematic error \cite{TverskyKahneman1974}. Later work reframed heuristics as effort-reduction mechanisms that reduce search, simplify choice, and make decisions possible under limited cognitive resources \cite{ShahOppenheimer2008}. Experiments on cognitive demand show that people often avoid mentally demanding choices even when effort costs are indirect or implicit \cite{Kool2010}. Choice-overload studies further show that more options and more information can reduce motivation and action, demonstrating that informational abundance can become behaviorally costly \cite{IyengarLepper2000}. 

Further in this line, information processing is biologically expensive, so evolution should favor economical cognition rather than universal complexity \cite{AielloWheeler1995}. The expensive-tissue hypothesis links large brains to metabolic trade-offs, suggesting that cognitive capacity must be paid for by energy reallocation elsewhere in the organism \cite{AielloWheeler1995}. Neuroscientific evidence likewise shows that neural signaling has direct metabolic costs, so representing and transmitting information is physically expensive \cite{Laughlin1998}. Evolutionary models of memory emphasize that storing information can create maintenance costs, interference costs, and opportunity costs \cite{Dukas1999}. Fast-and-frugal models therefore interpret simple rules not as inferior approximations in every case, but as adaptive tools suited to environments where information and computation are costly \cite{GigerenzerGoldstein1996}.

Based on these processes, societies respond to information costs through institutions, markets, firms, and hierarchies \cite{Galbraith1974}. Indeed, organization design theory treats uncertainty as an information-processing burden and explains organizational structure as a mechanism for matching information requirements with processing capacity \cite{Galbraith1974}. The firm can be modeled as a communication network in which tasks are allocated so as to economize on the costs of transmitting and processing knowledge \cite{BoltonDewatripont1994}. Knowledge-based hierarchy emerges when ordinary problems can be solved locally but exceptional problems are routed to specialized experts \cite{Garicano2000}. The distinction between formal and real authority shows that control depends not only on legal decision rights but also on who possesses the information needed to make effective decisions \cite{AghionTirole1997}. Markets solve part of the same problem because prices compress dispersed knowledge into signals that can guide agents without requiring them to know the full economy \cite{Hayek1945}. Institutions more generally reduce uncertainty by stabilizing expectations, rules, and incentives across repeated interactions \cite{North1990}. Division of labour expands the productive use of knowledge, but it is limited by coordination costs that grow as specialized agents become more interdependent \cite{BeckerMurphy1992}. Recent evolutionary models of hierarchy similarly show that larger groups may benefit from centralized structure when coordination pressure rises with group size \cite{PerretHartPowers2020}. Modern market microstructure adds that new technologies can also change the incentives to acquire information, as algorithmic trading may reduce some forms of information acquisition while increasing the value of speed and processing capacity \cite{weller2018does}.

\subsection{Bounded rationality, rational inattention, and cognitive effort}
\label{subsec:bounded_rationality}
Theories of bounded rationality, rational inattention, and cognitive effort all begin from the same central observation: agents do not face decision problems only through preferences and constraints, but also through limited capacities for attention, representation, memory, computation, and control \cite{March1978}. This shifts the analysis of choice away from the image of a fully optimizing actor and toward an actor who must decide how much of the world to process before acting \cite{Conlisk1996}. The resulting behavior is not necessarily irrational; rather, it reflects the fact that the process of becoming informed, comparing alternatives, and exerting mental control is itself costly \cite{McFadden1999}. Evidence from psychology and economics shows that these limits matter in real environments, including consumer markets, financial choices, labour decisions, and policy responses \cite{DellaVigna2009}.

Bounded rationality originally framed decision-making as a procedural activity carried out under complexity rather than as the direct solution of a complete optimization problem \cite{CyertMarch1963}. In this view, agents search, simplify, categorize, and stop once a sufficiently good option has been found, because exhaustive comparison is often infeasible \cite{NewellSimon1972}. Decision procedures therefore adapt to task demands: when a problem is simple, agents may use more compensatory and information-rich strategies, but when complexity increases, they tend to rely on selective attention, simplifying rules, and reduced comparison sets \cite{PayneBettmanJohnson1993}. This is why heuristics can be understood not merely as cognitive defects, but as tools that trade some potential accuracy for major savings in effort and computation \cite{GigerenzerSelten2001}. The same logic appears in strategic settings, where individuals frequently reason with limited depth rather than computing full equilibrium behavior \cite{CamererHoChong2004}.

Rational inattention formalizes this intuition by treating attention as a scarce resource that must be allocated optimally \cite{Gabaix2014}. Instead of assuming that all available information is freely observed and processed, rational-inattention models ask which signals an agent should acquire when each additional bit of precision has a cost \cite{MatejkaMcKay2015}. This approach explains why choice can appear noisy even when agents are purposeful: stochastic choice may emerge because agents optimally choose imperfect information about alternatives \cite{CaplinDean2015}. It also explains why individuals may not consider all available options, since the formation of a consideration set is itself an informational decision \cite{CaplinDeanLeahy2019}. More broadly, behavioral inattention connects many empirical deviations from standard rationality to the same underlying mechanism: agents economize on attention when the world presents more information than they can profitably process \cite{Gabaix2019}. Perceptual models make a similar point by showing that valuation errors may arise from efficient but capacity-limited encoding of economic stimuli \cite{Woodford2012}.

The empirical importance of these constraints is visible in cases where equivalent information produces different behavior depending on how it is presented. Consumers respond differently to taxes when they are salient at the point of purchase, showing that economic choices depend not only on objective prices but also on whether relevant information enters attention \cite{ChettyLooneyKroft2009}. In the used-car market, left-digit bias in odometer readings shows that even high-stakes transactions are affected by coarse mental representations and focal cues \cite{LaceteraPopeSydnor2012}. Individual differences in cognitive reflection further show that people vary in their willingness or ability to override intuitive responses, and these differences are associated with risk-taking and intertemporal choice \cite{Frederick2005}. These findings suggest that bounded rationality is not simply a theoretical relaxation of optimization, but an empirically measurable feature of economic behavior.

Cognitive effort provides the motivational foundation for this pattern. Attention and control are not only limited; they are also experienced as costly, which means agents may avoid demanding computations even when they could improve accuracy \cite{Kahneman1973}. Neuroeconomic work treats cognitive effort as a decision variable, where people compare the expected benefit of exerting control with the subjective cost of doing so \cite{WestbrookBraver2015}. The expected-value-of-control framework develops this idea by arguing that control is allocated when its anticipated benefits exceed the costs of implementation \cite{ShenhavBotvinickCohen2013}. Motivation, reward, and control are therefore tightly linked, because incentives can increase the willingness to engage costly cognition \cite{BotvinickBraver2015}. Opportunity-cost theories add that effort feels costly partly because using executive resources for one task prevents their use elsewhere \cite{KurzbanDuckworthKableMyers2013}.

These perspectives imply that simplicity is not an accidental failure of decision-making but a predictable response to costly cognition. Resource-rational theories make this explicit by evaluating cognitive mechanisms according to how well they use limited computational resources, rather than by comparing them to unconstrained optimization \cite{GriffithsLiederGoodman2015}. In this framework, a simple rule can be rational when the additional value of a more complex calculation is smaller than the cost of performing it \cite{LiederGriffiths2020}. Information-theoretic models reach a similar conclusion by treating bounded rationality as utility maximization under information-processing constraints \cite{OrtegaBraunDyerKimTishby2015}. The general implication is that cognitive economy is a fundamental part of adaptation: agents simplify because attention, computation, and control are scarce, and complex societies must therefore organize not only the distribution of material resources but also the distribution of cognitive effort \cite{BossaertsMurawski2017}.

\subsection{Evolutionary costs of cognition and information processing}
\label{subsec:evolutionary_cognition_costs}
The evolutionary cost of cognition can be understood in economic terms as the cost of producing, maintaining, and using an information-processing asset under scarcity \cite{Alchian1950}. From this perspective, cognition is not free intelligence added to an organism or agent, but a capital-intensive capacity that must generate returns large enough to justify its acquisition and operating costs \cite{NelsonWinter1982}. Human cognitive expansion is therefore often interpreted as an investment problem in which longer learning periods, skill accumulation, and delayed returns become viable only when the expected productivity of information processing is sufficiently high \cite{RobsonKaplan2003}. Life-history accounts make the same point by arguing that intelligence, longevity, and complex foraging coevolve when knowledge-intensive production can compensate for the costs of extended development \cite{KaplanHillLancasterHurtado2000}.

The most direct cost of cognition is the resource cost of building and operating neural infrastructure \cite{IslerVanSchaik2009}. Larger brains require energetic support, so selection should favor cognitive expansion only when additional information-processing capacity increases net fitness or productivity after metabolic costs are deducted \cite{NavarreteVanSchaikIsler2011}. Neural signaling itself is expensive, which means that cognition has a continuing operating cost rather than only a one-time construction cost \cite{AttwellLaughlin2001}. Communication between neurons must also economize on bandwidth, precision, and reliability, creating a biological analogue of costly information transmission in economic systems \cite{LaughlinSejnowski2003}. Sensory systems face the same constraint because collecting more detailed information from the environment can improve action, but only at the price of higher acquisition and processing costs \cite{NivenLaughlin2008}. Across vertebrates, the relative metabolic burden of the central nervous system suggests that cognitive capacity is constrained by a broad energy budget rather than by informational usefulness alone \cite{MinkBlumenschineAdams1981}.

These costs imply that larger or more complex cognition is not automatically superior, because returns to information processing are environment-dependent and may exhibit diminishing marginal value \cite{ChittkaNiven2009}. Experimental evolution provides direct support for this trade-off: selection for learning ability in fruit flies can generate fitness costs, indicating that improved information processing may reduce performance along other dimensions \cite{MeryKawecki2003}. Artificial selection experiments in guppies similarly show that larger brains can improve cognitive performance while imposing costs on reproduction or other fitness-related traits \cite{Kotrschal2013}. In economic language, cognition therefore involves opportunity costs: resources allocated to information processing cannot simultaneously be allocated to growth, reproduction, physical capacity, or other adaptive investments \cite{NivenFarris2008}. The relevant question is not whether complex cognition is useful, but whether its marginal benefit exceeds its marginal cost in the environment in which the agent operates \cite{GodfreySmith2002}.

The value of cognition rises when environments create high returns to forecasting, coordination, innovation, and social inference \cite{DunbarShultz2007}. Social-brain theories argue that complex groups increase the payoff from tracking relationships, intentions, coalitions, and reputations, making cognition valuable as a coordination technology \cite{ByrneWhiten1988}. Comparative studies show that innovation, social learning, and brain size are associated across primates, suggesting that cognition can function as an adaptive response to information-rich social and ecological problems \cite{ReaderLaland2002}. In birds, larger brains are associated with improved responses to novel environments, which suggests that information-processing capacity can act as a form of flexible capital under uncertainty \cite{SolDuncanBlackburnCasseyLefebvre2005}. Yet flexibility remains costly, so selection should favor it mainly when environmental variability makes fixed routines less profitable \cite{vanSchaikBurkart2011}.

A crucial economic implication is that cognition can be partly substituted by social learning, institutions, and culture \cite{BoydRicherson1985}. Cultural transmission allows agents to acquire useful behavioral rules without independently paying the full cost of discovery, making society a mechanism for amortizing information costs across individuals and generations \cite{HenrichMcElreath2003}. This can generate cumulative knowledge, but only when populations are large and connected enough to preserve and improve complex skills \cite{Henrich2004}. Models and experiments on social learning show that copying others can be efficient when private information is costly, uncertain, or slow to acquire \cite{Rendell2010}. Demographic studies of technological complexity likewise suggest that larger populations can sustain more complex toolkits because they distribute the costs of innovation, imitation, and error correction over more learners \cite{KlineBoyd2010}. In this sense, culture acts as an external memory system that reduces the need for every individual to rediscover the same information independently \cite{Tomasello1999}. However, social learning also creates dependence on transmitted signals, so populations must balance the savings from imitation against the risk of copying outdated or maladaptive information \cite{Rogers1988}.

When considered as a whole, the evolutionary literature supports an economic interpretation of cognitive simplicity. Complex cognition is valuable when it increases the expected return from action, but it is costly to build, costly to operate, and costly to coordinate \cite{SterlingLaughlin2015}. Selection should therefore not be expected to produce universally complex agents, because in many environments the efficient solution is a portfolio of cognitive strategies: some information is processed individually, some is delegated to specialists, and some is stored in social rules, tools, markets, or institutions \cite{Laland2017}. Cognitive economy emerges from this allocation problem: evolution favors not maximal information processing, but information processing whose benefits exceed its full cost \cite{RichersonBoyd2005}. This outcome is the foundation of the modeling attempt presented in this study.

\section{Model Definition}
\label{sec:model}
Consider a population of size \((N \geq 2\) of agents. The population operates in an environment with complexity level \(e \geq 0\). Higher values of \(e\) represent environments with greater uncertainty, more interdependence among agents, more possible actions, or more costly consequences of poor decisions. Here, each agent receives a decision-relevant payoff. We abstract from the specific action space and represent the expected gross benefit of decision-making by the function \(B(e,q)\), where \(q \in [0,1]\) denotes the quality of the information used in the decision. We assume that \(B(e,q)\) satisfies:
\begin{equation}
\frac{\partial B}{\partial q} > 0,
\qquad
\frac{\partial B}{\partial e} \geq 0,
\qquad
\frac{\partial^2 B}{\partial e \partial q} \geq 0.
\end{equation}
Namely, better information improves expected payoff, and the marginal value of information is weakly increasing in environmental complexity.

In order to keep the model simple while still expressive enough to capture the dynamics, let us assume that agents operate under one of two cognitive modes: simple or complex. A simple agent uses low-cost heuristics, routines, norms, or external cues. A complex agent acquires and processes more information independently. Let \(q_S\) and \(q_C\) denote the information quality available to simple and complex agents, respectively, with \(q_S<q_C\). Let \(K_S(e)\) and \(K_C(e)\) denote their corresponding cognitive costs, with \(K_S(e)<K_C(e)\). These costs include attention, learning, deliberation, computation, search, and coordination costs. We assume that the cost gap between complex and simple cognition is weakly increasing in environmental complexity:
\begin{equation}
\frac{\partial}{\partial e}\left(K_C(e)-K_S(e)\right) \geq 0.
\end{equation}
Simply put, this assumption captures the idea that complexity makes independent high-quality decision-making increasingly costly.

In a uniformly complex society, all agents independently acquire and process high-quality information. The per-agent utility is \(U_C(e)=B(e,q_C)-K_C(e)\). Total welfare in the uniformly complex society is therefore \(W_C(e,N)=N\left(B(e,q_C)-K_C(e)\right)\). This configuration maximizes individual autonomy and avoids delegation loss, but it requires every member of the population to bear the cost of complex cognition. In a uniformly simple society, all agents rely on low-cost cognition. The per-agent utility is \(U_S(e)=B(e,q_S)-K_S(e)\). Total welfare is \(W_S(e,N)=N\left(B(e,q_S)-K_S(e)\right)\). This configuration minimizes cognitive cost, but it may lose substantial utility when the environment is complex and the value of information is high. This is the simplest case. Next, we define a heterogeneous society with cognitive specialization. 

In the heterogeneous society with a cognitive specialization configuration, \(N-1\) agents remain simple, while one agent acts as a central decision-maker. The central decision-maker acquires and processes information of quality \(q_D\), where \(q_D \geq q_C\). The central decision-maker translates this information into a simplified output for the rest of the population. This output may take the form of a rule, recommendation, command, price, norm, organizational routine, or institutional signal. In this context, delegation is imperfect. Hence, let \(L(e,N) \geq 0\) denote the delegation loss incurred by each simple follower when using the output of the central decision-maker. This loss captures distortion, agency problems, imperfect communication, reduced autonomy, or a mismatch between the central decision and local conditions. We allow \(L(e,N)\) to increase with environmental complexity and population size. Each simple follower pays a transfer \(r \geq 0\) to the central decision-maker. This transfer may represent taxes, wages, fees, obedience rents, status benefits, or other forms of compensation. The follower utility in the decision-compression society is \(U_F(e,N,r)=B(e,q_D)-L(e,N)-K_S(e)-r\).

The central decision-maker bears the cognitive cost \(K_D(e)\), with \(K_D(e) \geq K_C(e)\), and may also bear an institutional overhead cost \(H(e,N) \geq 0\). This overhead captures the cost of communicating decisions, maintaining authority, monitoring followers, or operating the decision-compression institution. The central decision-maker receives transfers from the \(N-1\) followers and may obtain an additional private benefit \(\chi \geq 0\), representing status, authority, control, reputation, or superior access to information. The central decision-maker's utility is \(U_D(e,N,r,\chi)=B(e,q_D)-K_D(e)-H(e,N)+(N-1)r+\chi\). 

Ignoring pure transfers, total welfare in the decision-compression society is
\begin{equation}
W_D(e,N)=N B(e,q_D)-(N-1)L(e,N)-(N-1)K_S(e)-K_D(e)-H(e,N).
\label{eq:welfare}
\end{equation}

\subsection{Dominance of decision-compression}
The decision-compression society dominates the uniformly complex society when
\(W_D(e,N)>W_C(e,N)\). Substituting the welfare expressions gives
\begin{equation}
N\left(B(e,q_D)-B(e,q_C)\right)
+
N K_C(e)
-
(N-1)K_S(e)
-
K_D(e)
-
H(e,N)
>
(N-1)L(e,N).
\label{eq:dc_dominates_complex}
\end{equation}

This condition states that decision-compression is socially beneficial when
the informational advantage of the central decision-maker and the cognitive
costs saved by the simple followers exceed the delegation loss and
institutional overhead.

A particularly informative case is \(q_D=q_C\), where the central
decision-maker has the same information quality as a complex individual but
processes information once on behalf of others. In this case, the condition in Eq. 
\eqref{eq:dc_dominates_complex} becomes
\begin{equation}
N K_C(e)
-
(N-1)K_S(e)
-
K_D(e)
-
H(e,N)
>
(N-1)L(e,N).
\label{eq:dc_dominates_complex_equal_quality}
\end{equation}

Thus, even without superior information, central decision-making can be
welfare-improving when the savings from avoiding duplicated cognitive effort
exceed the costs of delegation and institutional operation.

The decision-compression society dominates the uniformly simple society when
\(W_D(e,N)>W_S(e,N)\). This is equivalent to
\begin{equation}
N\left(B(e,q_D)-B(e,q_S)\right)
>
(N-1)L(e,N)
+
K_D(e)
+
H(e,N)
-
K_S(e).
\label{eq:dc_dominates_simple}
\end{equation}

This condition states that decision-compression improves upon universal
simplicity when the informational benefit generated by the central
decision-maker exceeds delegation loss, central cognitive cost, and
institutional overhead.

\subsection{Individual rationality and the absence of a volunteer's dilemma}

The emergence of a central decision-maker may appear to create a volunteer's dilemma: one individual must bear a high cognitive and institutional cost while others benefit from simplified decisions. In this model, however, the central role need not be altruistic and therefore can emerge spontaneously. The central decision-maker voluntarily accepts the role when \(U_D(e,N,r,\chi) \geq U_C(e)\). Substituting the
utility functions gives
\begin{equation}
(N-1)r+\chi
\geq
B(e,q_C)-B(e,q_D)
+
K_D(e)-K_C(e)
+
H(e,N).
\label{eq:dm_participation}
\end{equation}

Therefore, the central role is individually rational when transfers, status,
authority, or control benefits compensate for the additional cost of
centralized decision-making.

Followers accept delegation when their utility under decision-compression is
at least as high as their utility under uniform complexity, namely
\(U_F(e,N,r) \geq U_C(e)\). This gives
\begin{equation}
B(e,q_D)
-
L(e,N)
-
K_S(e)
-
r
\geq
B(e,q_C)
-
K_C(e),
\end{equation}
or
\begin{equation}
r
\leq
B(e,q_D)-B(e,q_C)
+
K_C(e)-K_S(e)
-
L(e,N).
\label{eq:follower_participation}
\end{equation}

A feasible decision-compression institution exists when there is some transfer
\(r\) satisfying both the central decision-maker's participation condition and
the followers' participation condition. This requires
\begin{equation}
\frac{
B(e,q_C)-B(e,q_D)
+
K_D(e)-K_C(e)
+
H(e,N)
-
\chi
}{N-1}
\leq
B(e,q_D)-B(e,q_C)
+
K_C(e)-K_S(e)
-
L(e,N).
\label{eq:feasible_transfer}
\end{equation}

When this condition holds, the central decision-maker is not a volunteer who
sacrifices utility for the group. Rather, cognitive specialization is mutually
sustainable: followers gain by reducing cognitive costs, and the central
decision-maker gains through compensation, authority, or informational
advantage.

\subsection{Role-choice game and equilibrium centralization}

We now extend the model to a symmetric role-choice game in order to study
whether a central decision-maker can emerge endogenously. The agents are
ex ante symmetric: before roles are chosen, every agent has the same cognitive
ability, the same opportunity to become a central decision-maker, and the same
payoff functions. Symmetry therefore concerns the agents' primitives, not the
equilibrium outcome.

Each agent chooses one of two roles: follower or decision-maker. Let
\(k \in \{0,1,\ldots,N\}\) denote the number of agents who choose to become
decision-makers. If \(k=0\), no decision-compression institution exists and
each agent receives the complex baseline payoff
\begin{equation}
U_0(e)=B(e,q_C)-K_C(e).
\end{equation}

If \(k \geq 1\), followers use the decision-compression output and receive
\begin{equation}
U_F(e)=B(e,q_D)-L(e,N)-K_S(e)-r.
\end{equation}

In this configuration, decision-makers duplicate the costly information-processing
role and compete for followers. For analytical clarity, suppose that each
follower pays the transfer \(r\) to one decision-maker and that followers are
divided symmetrically among the \(k\) decision-makers. In addition, multiple
decision-makers create a duplication or conflict cost \(\eta(k-1)\), where
\(\eta \geq 0\), and dilute the private authority benefit by \(\psi(k-1)\),
where \(\psi \geq 0\). The payoff of each decision-maker when there are \(k\)
decision-makers is therefore
\begin{equation}
U_D(e,k)
=
B(e,q_D)
-
K_D(e)
-
H(e,N)
-
\eta(k-1)
+
\frac{N-k}{k}r
+
\chi
-
\psi(k-1).
\label{eq:dm_payoff_k}
\end{equation}

The term \((N-k)r/k\) is the transfer revenue received by each decision-maker.
It decreases as the number of decision-makers increases because the follower
population is divided among competing decision-makers.

An allocation with \(k\) decision-makers is stable if no single agent can
improve its utility by changing role. Thus, if \(k=0\), stability requires that
no agent wants to become a decision-maker. If \(1 \leq k \leq N-1\), stability
requires that no follower wants to become an additional decision-maker and no
decision-maker wants to become a follower. If \(k=N\), stability requires that
no decision-maker wants to become a follower.

\begin{proposition}
\label{prop:unique_central_dm}
Suppose that agents are ex ante symmetric and that the following conditions
hold:
\begin{equation}
B(e,q_D)
-
K_D(e)
-
H(e,N)
+
(N-1)r
+
\chi
>
B(e,q_C)
-
K_C(e),
\label{eq:entry_condition}
\end{equation}
\begin{equation}
B(e,q_D)
-
L(e,N)
-
K_S(e)
-
r
\geq
B(e,q_C)
-
K_C(e),
\label{eq:follower_participation_condition}
\end{equation}
and
\begin{equation}
K_D(e)
+
H(e,N)
+
\eta
+
\psi
-
L(e,N)
-
K_S(e)
-
\chi
>
\frac{N}{2}r.
\label{eq:no_duplication_condition}
\end{equation}
Then the unique stable equilibrium of the role-choice game contains exactly
one central decision-maker and \(N-1\) followers.
\end{proposition}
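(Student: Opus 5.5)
The plan is a direct case analysis on the number \(k\) of decision-makers, using the stability notion defined just before Proposition~\ref{prop:unique_central_dm}: a profile is stable if no single agent gains by switching role, and the deviating agent's payoff is evaluated at the configuration with \(k\pm 1\) decision-makers. I will show three things: \(k=0\) is unstable, \(k=1\) is stable, and every \(k\geq 2\) is unstable. Because agents are ex ante symmetric, \(k=1\) then describes the unique stable outcome, up to the identity of the central agent.

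\emph{Step 1 (\(k=0\) is unstable).} At \(k=0\) every agent receives \(U_0(e)=B(e,q_C)-K_C(e)\). If one agent switches to decision-maker, then \(k=1\). By Eq.~\eqref{eq:dm_payoff_k}, the deviator receives
\[
U_D(e,1)=B(e,q_D)-K_D(e)-H(e,N)+(N-1)r+\chi .
\]
This is strictly larger than \(U_0(e)\) by Eq.~\eqref{eq:entry_condition}, so the deviation is profitable.

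\emph{Step 2 (\(k=1\) is stable).} I need to rule out two deviations.
\begin{itemize}
\item If the unique decision-maker becomes a follower, then \(k=0\) and it receives \(U_0(e)\). Eq.~\eqref{eq:entry_condition} again gives \(U_D(e,1)>U_0(e)\), so this deviation does not pay.
\item If a follower becomes a second decision-maker, it obtains \(U_D(e,2)\) in place of \(U_F(e)=B(e,q_D)-L(e,N)-K_S(e)-r\). Subtracting the two payoffs, the condition \(U_F(e)\geq U_D(e,2)\) reduces to
\[
K_D(e)+H(e,N)+\eta+\psi-L(e,N)-K_S(e)-\chi \;\geq\; \tfrac{N-2}{2}r+r=\tfrac{N}{2}r .
\]
Eq.~\eqref{eq:no_duplication_condition} gives this with strict inequality.
\end{itemize}
Eq.~\eqref{eq:follower_participation_condition} is not needed for these deviation checks. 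I will remark that it guarantees followers weakly prefer this configuration to the complex baseline, so the equilibrium is also individually rational relative to \(k=0\).

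\emph{Step 3 (\(k\geq 2\) is unstable).} This is the step needing the one real observation: \(U_D(e,k)\) is strictly decreasing in \(k\). The transfer term \((N-k)r/k\) is nonincreasing because \(r\geq 0\). The penalty terms \(-\eta(k-1)\) and \(-\psi(k-1)\) are nonincreasing because \(\eta,\psi\geq 0\). The decline is strict when \(r>0\) or \(\eta+\psi>0\); in any case it is weak, and weak suffices here. Hence for any \(k\geq 2\),
\[
U_D(e,k)\leq U_D(e,2)<U_F(e),
\]
using the strict inequality from Step 2. Now take a decision-maker who switches to follower. The configuration then has \(k-1\geq 1\) decision-makers, so an institution still exists and the deviator receives \(U_F(e)\), which strictly exceeds \(U_D(e,k)\). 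This argument covers \(k=N\) as well.

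The main obstacle is bookkeeping rather than depth. I must be careful that \(U_F(e)\) does not depend on \(k\) as long as \(k\geq1\), and that the deviator's payoff is always evaluated at the post-deviation \(k\). The delicate case is the decision-maker's exit at \(k=1\): there the outside option is the complex baseline \(U_0(e)\), not \(U_F(e)\), and that case is exactly what Eq.~\eqref{eq:entry_condition} handles. Combining Steps 1 to 3 yields the proposition.
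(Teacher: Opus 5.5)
Your proposal is correct and follows the paper's proof essentially step for step. It uses the same three-case analysis on $k$: $k=0$ is unstable by the entry condition, $k=1$ is stable by the entry and no-duplication conditions, and $k\ge 2$ is unstable via monotonicity of $U_D(e,k)$, giving $U_D(e,k)\le U_D(e,2)<U_F(e)$. Your two side remarks are accurate refinements: only weak monotonicity is needed (the paper's claim of strict decrease fails when $r=0$ and $\eta+\psi=0$), and the follower-participation condition is never actually invoked.
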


\begin{proof}
First consider the allocation with \(k=0\). In this allocation, every agent
receives \(U_0(e)=B(e,q_C)-K_C(e)\). If one agent deviates and becomes the
sole central decision-maker, that agent receives \(U_D(e,1)\), where
\begin{equation}
U_D(e,1)
=
B(e,q_D)
-
K_D(e)
-
H(e,N)
+
(N-1)r
+
\chi.
\end{equation}
By the condition in Eq. \eqref{eq:entry_condition}, \(U_D(e,1)>U_0(e)\). Hence, the
allocation with \(k=0\) is not stable.

Now consider the allocation with \(k=1\). The sole decision-maker receives
\(U_D(e,1)\). If this agent exits the central role, the society returns to the
\(k=0\) allocation and the deviating agent receives \(U_0(e)\). By the condition in Eq. 
\eqref{eq:entry_condition}, \(U_D(e,1)>U_0(e)\), so the sole decision-maker
does not want to exit.

It remains to show that no follower wants to become a second decision-maker. A
follower obtains
\begin{equation}
U_F(e)=B(e,q_D)-L(e,N)-K_S(e)-r.
\end{equation}
If the follower deviates and becomes a second decision-maker, the deviating
agent obtains
\begin{equation}
U_D(e,2)
=
B(e,q_D)
-
K_D(e)
-
H(e,N)
-
\eta
+
\frac{N-2}{2}r
+
\chi
-
\psi.
\end{equation}
The condition \(U_F(e) \geq U_D(e,2)\) is equivalent to
\begin{equation}
K_D(e)
+
H(e,N)
+
\eta
+
\psi
-
L(e,N)
-
K_S(e)
-
\chi
\geq
\frac{N}{2}r.
\end{equation}
This inequality is implied strictly by condition
\eqref{eq:no_duplication_condition}. Therefore, no follower wants to enter the
central role when one central decision-maker already exists. The allocation
with \(k=1\) is stable.

Finally, consider any allocation with \(k \geq 2\). We show that at least one
decision-maker wants to exit and become a follower. For any \(k \geq 2\), the
payoff of a decision-maker is
\begin{equation}
U_D(e,k)
=
B(e,q_D)
-
K_D(e)
-
H(e,N)
-
(\eta+\psi)(k-1)
+
\frac{N-k}{k}r
+
\chi.
\end{equation}
The function \(U_D(e,k)\) is strictly decreasing in \(k\) for \(k \geq 1\),
because the duplication and authority-dilution term \((\eta+\psi)(k-1)\)
increases with \(k\), while the per-decision-maker transfer term
\((N-k)r/k\) decreases with \(k\). Therefore, for every \(k \geq 2\),
\begin{equation}
U_D(e,k) \leq U_D(e,2).
\end{equation}
From the condition in Eq. \eqref{eq:no_duplication_condition}, we have
\(U_F(e)>U_D(e,2)\). Hence, for every \(k \geq 2\),
\begin{equation}
U_F(e)>U_D(e,k).
\end{equation}
Thus, any decision-maker in an allocation with \(k \geq 2\) can improve its
payoff by exiting the central role and becoming a follower. Therefore, no
allocation with \(k \geq 2\) is stable.

We have shown that \(k=0\) is unstable, \(k=1\) is stable, and every
\(k \geq 2\) is unstable. Hence, under the stated conditions, the unique stable
equilibrium contains exactly one central decision-maker and \(N-1\) followers.
\end{proof}

Proposition~\ref{prop:unique_central_dm} shows that symmetry at the individual
level does not imply symmetry in equilibrium roles. Even when all agents are
initially identical, the combination of positive returns to being the first
decision-maker and negative returns to duplicating the central role generates
asymmetric specialization. One agent becomes the central decision-maker, while
the remaining agents rationally remain followers. The equilibrium is therefore
asymmetric in roles but symmetric in the sense that any agent could have
occupied the central position ex ante.

\section{Numerical Investigation}
\label{sec:results}
In this section, we first describe the numerical setup and parameterization used to evaluate the theoretical model, and then present the resulting welfare comparisons, dominance regions, transfer-feasibility analysis, role-choice dynamics, and sensitivity tests.

\subsection{Setup and analysis design}
\label{subsec:numerical_setup}
The numerical investigation is designed to illustrate the theoretical mechanisms developed in Section~\ref{sec:model}. The analysis does not aim to estimate the model empirically, but rather to show how the welfare and individual-rationality conditions behave under a transparent parameterization. Specifically, the figures examine five questions. First, how do the three social configurations compare as environmental complexity increases? Second, over which regions of environmental complexity and population size does each configuration dominate? Third, what are the positive and negative components that determine the welfare advantage of decision-compression over universal complexity? Fourth, when does a feasible transfer interval exist such that both followers and the central decision-maker prefer decision-compression? Fifth, does the role-choice game generate a single central decision-maker rather than no decision-maker or multiple competing decision-makers?

Following the theoretical model, the gross benefit of information is specified as
\begin{equation}
B(e,q)=\alpha e q.
\end{equation}
Cognitive costs are given by
\begin{equation}
K_S(e)=c_S e,
\qquad
K_C(e)=c_C e^\beta,
\qquad
K_D(e)=c_D e^\beta,
\end{equation}
and delegation loss and institutional overhead are specified as
\begin{equation}
L(e,N)=\lambda e N^\ell,
\qquad
H(e,N)=h e N^m.
\end{equation}
The baseline parameters are
\begin{equation}
\alpha=3.0,\quad
q_S=0.45,\quad
q_C=0.72,\quad
q_D=0.82,
\end{equation}
\begin{equation}
c_S=0.40,\quad
c_C=0.70,\quad
c_D=0.95,\quad
\beta=1.40,
\end{equation}
and
\begin{equation}
\lambda=0.025,\quad
\ell=0.80,\quad
h=0.020,\quad
m=0.70.
\end{equation}
For the individual-rationality and role-choice analyses, the baseline values are
\begin{equation}
r=0.10,\qquad
\chi=0,\qquad
\eta=0.30,\qquad
\psi=0.30.
\end{equation}
The welfare functions are evaluated over environmental complexity \(e\in[0,10]\) and, where relevant, population size \(N\in[0,250]\). The figures are generated by directly evaluating the analytical expressions for \(W_S(e,N)\), \(W_C(e,N)\), \(W_D(e,N)\), the feasible-transfer bounds \(r_{\min}(e,N)\) and \(r_{\max}(e,N)\), and the role-choice payoffs \(U_D(e,k)\), \(U_F(e)\), and \(U_0(e)\). The sensitivity analysis varies only the delegation-loss parameter \(\lambda\), holding all other parameters fixed, in order to isolate the effect of delegation inefficiency on the dominance region of decision-compression.

\subsection{Results}
\label{subsec:numerical_results}
Figure~\ref{fig:welfare_curves} shows the total welfare of the uniformly simple, uniformly complex, and decision-compression societies as environmental complexity \(e\) increases for a fixed population size. Technically, the figure plots \(W_S(e,N)\), \(W_C(e,N)\), and \(W_D(e,N)\) on the same axis for \(N=110\). The relevant result is that decision-compression can dominate when complexity is high enough for information to be valuable, but when universal complexity becomes costly because every agent independently pays the complex cognitive cost.

\begin{figure}[t]
    \centering
    \includegraphics[width=0.99\textwidth]{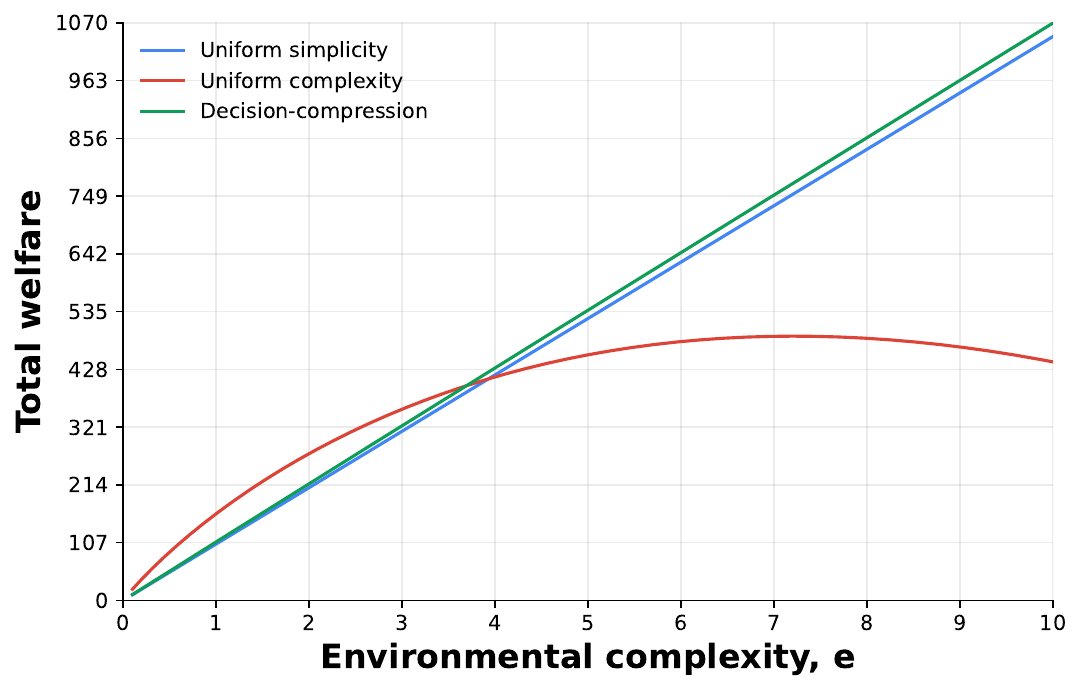}
    \caption{Welfare comparison across environmental complexity. The figure compares uniformly simple, uniformly complex, and decision-compression societies for a fixed population size.}
    \label{fig:welfare_curves}
\end{figure}

Figure~\ref{fig:dominance_phase} shows the welfare-maximizing social configuration over the environmental-complexity and population-size plane. Technically, each point in the \((e,N)\) grid is assigned to the configuration with the highest value among \(W_S(e,N)\), \(W_C(e,N)\), and \(W_D(e,N)\). The relevant result is the emergence of a decision-compression region: as populations become larger and environments become more complex, centralized information processing can outperform both universal simplicity and universal complexity.

\begin{figure}[t]
    \centering
    \includegraphics[width=0.99\textwidth]{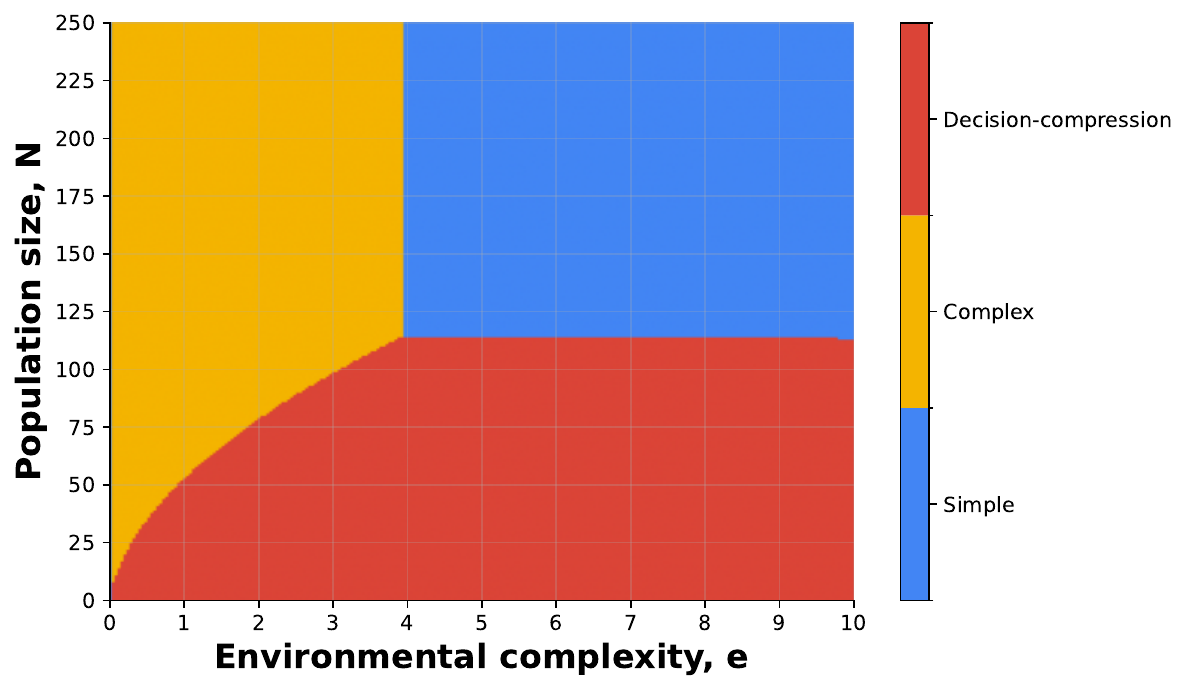}
    \caption{Dominant social configuration in the \((e,N)\) plane. Each point shows which configuration produces the highest total welfare.}
    \label{fig:dominance_phase}
\end{figure}

Figure~\ref{fig:decomposition} decomposes the welfare advantage of decision-compression over universal complexity. Technically, the figure plots the informational advantage, the cognitive-cost savings, the delegation and institutional costs, and the total difference \(W_D(e,N)-W_C(e,N)\). The relevant result is that decision-compression does not dominate simply because it uses better information; it dominates when the savings from avoiding duplicated complex cognition are large enough to compensate for delegation loss, institutional overhead, and the central decision-maker's cost.

\begin{figure}[t]
    \centering
    \includegraphics[width=0.99\textwidth]{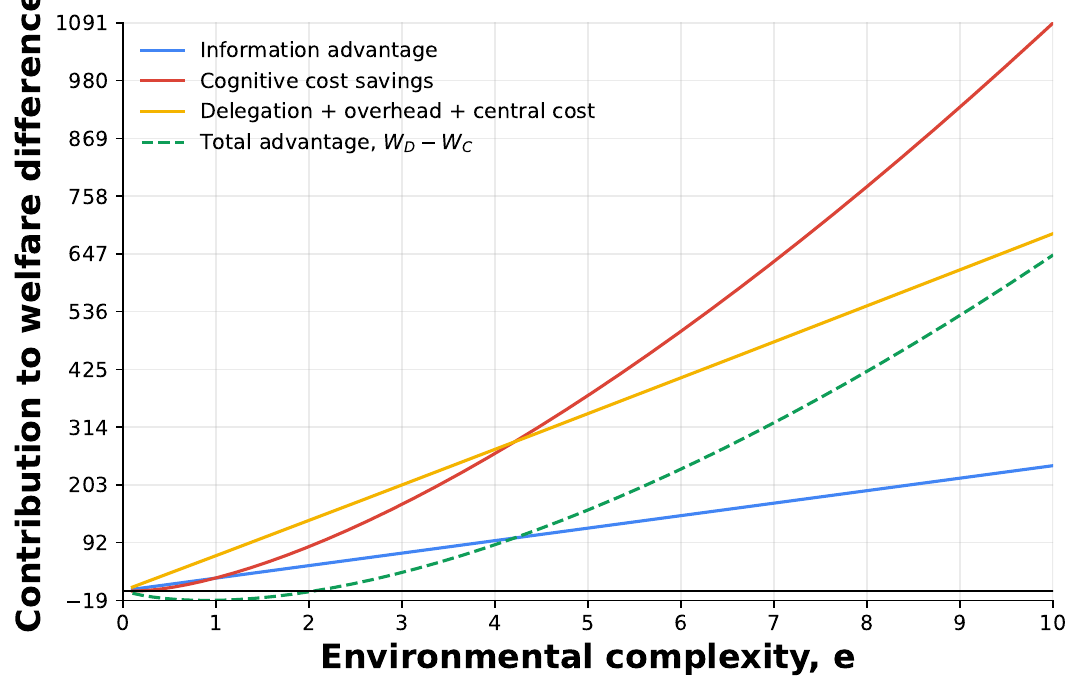}
    \caption{Decomposition of the welfare advantage of decision-compression over universal complexity. The dashed line shows the total welfare difference \(W_D-W_C\).}
    \label{fig:decomposition}
\end{figure}

Figure~\ref{fig:transfer_interval} shows the feasible transfer interval that can sustain decision-compression as an individually rational arrangement. Technically, the lower curve is the minimum transfer \(r_{\min}(e,N)\) required for the central decision-maker to accept the role, while the upper curve is the maximum transfer \(r_{\max}(e,N)\) followers are willing to pay. The relevant result is that there are parameter regions in which \(r_{\min}(e,N)\leq r_{\max}(e,N)\), so the central decision-maker is not a volunteer who sacrifices utility for the group; rather, the institution can be mutually sustainable.

\begin{figure}[t]
    \centering
    \includegraphics[width=0.99\textwidth]{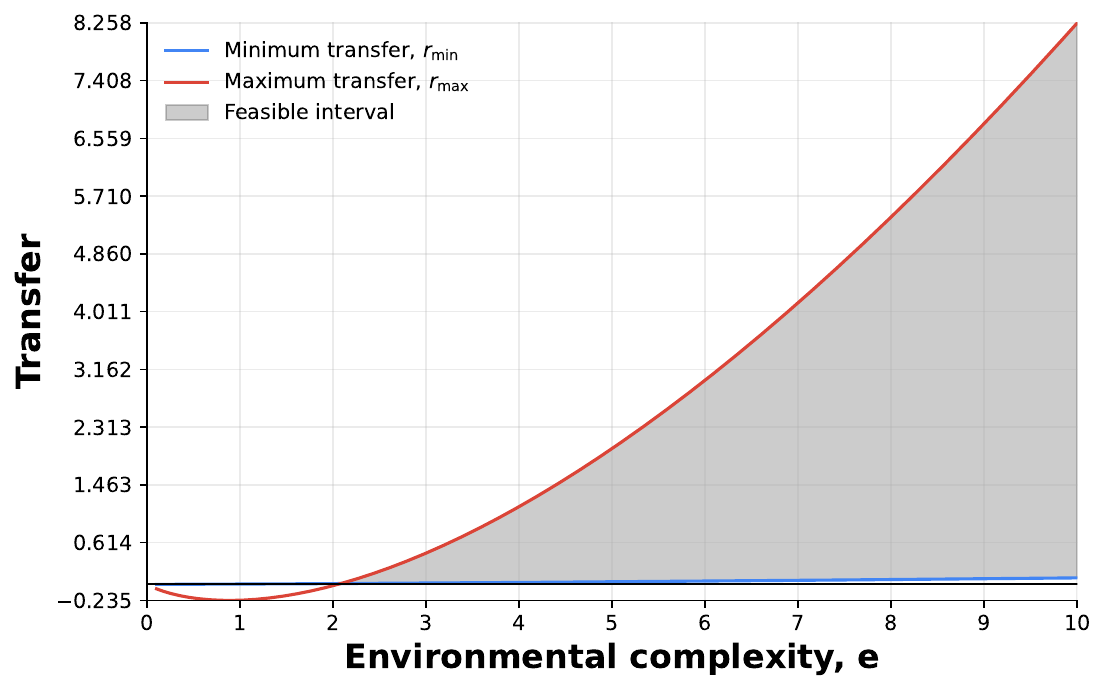}
    \caption{Feasible transfer interval for decision-compression. The shaded region indicates values of \(r\) that satisfy both the central decision-maker's and the followers' participation constraints.}
    \label{fig:transfer_interval}
\end{figure}

Figure~\ref{fig:role_choice} shows the payoff structure of the role-choice game as the number of decision-makers \(k\) changes. Technically, the figure plots the decision-maker payoff \(U_D(e,k)\) together with the follower payoff \(U_F(e)\) and the complex baseline payoff \(U_0(e)\). The relevant result is that becoming the first decision-maker can be attractive, but additional decision-makers reduce transfer revenue and create duplication or authority-dilution costs. This supports the theoretical result that exactly one central decision-maker can emerge as the stable role allocation.

\begin{figure}[t]
    \centering
    \includegraphics[width=0.99\textwidth]{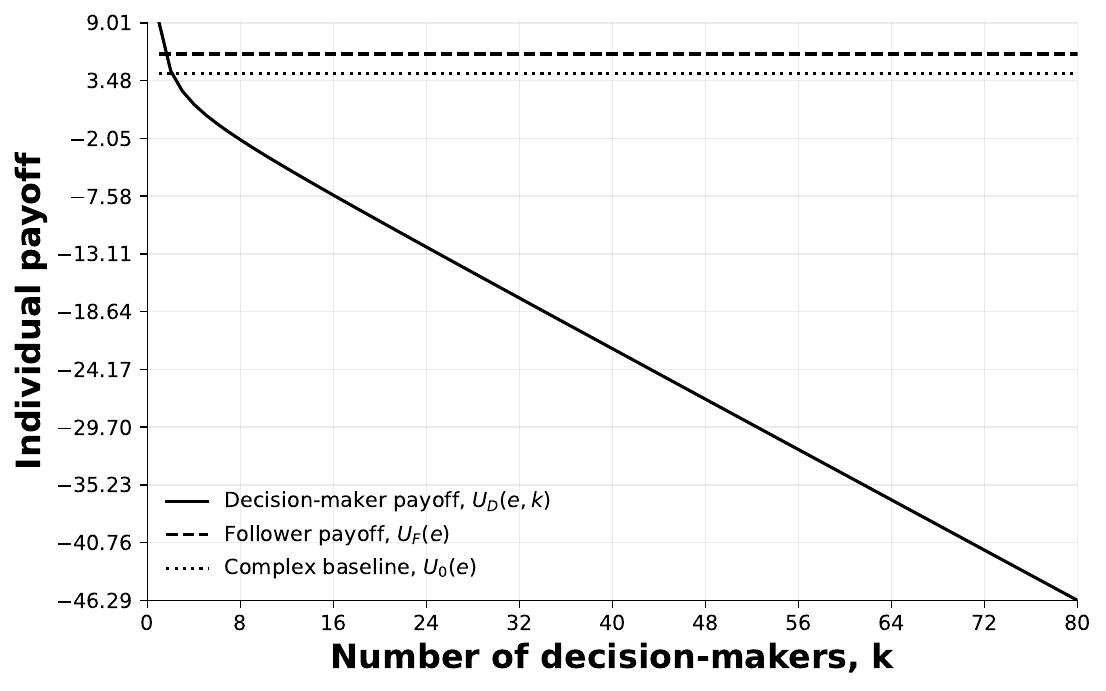}
    \caption{Role-choice payoffs as the number of decision-makers increases. The different line styles show the payoff of becoming a decision-maker, remaining a follower, and receiving the complex baseline payoff.}
    \label{fig:role_choice}
\end{figure}

Figure~\ref{fig:sensitivity_lambda} shows the sensitivity of the dominance regions to the severity of delegation loss. Technically, the four panels repeat the \((e,N)\) dominance diagram for \(\lambda\in\{0.005,0.015,0.025,0.045\}\), holding all other parameters fixed. The relevant result is that decision-compression is most robust when delegation loss is low or moderate. As \(\lambda\) increases, the decision-compression region contracts, showing that the advantage of centralized cognitive processing depends on the institution's ability to translate complex information into usable guidance without excessive distortion.

\begin{figure}[t]
    \centering
    \begin{subfigure}[t]{0.48\textwidth}
        \centering
        \includegraphics[width=\textwidth]{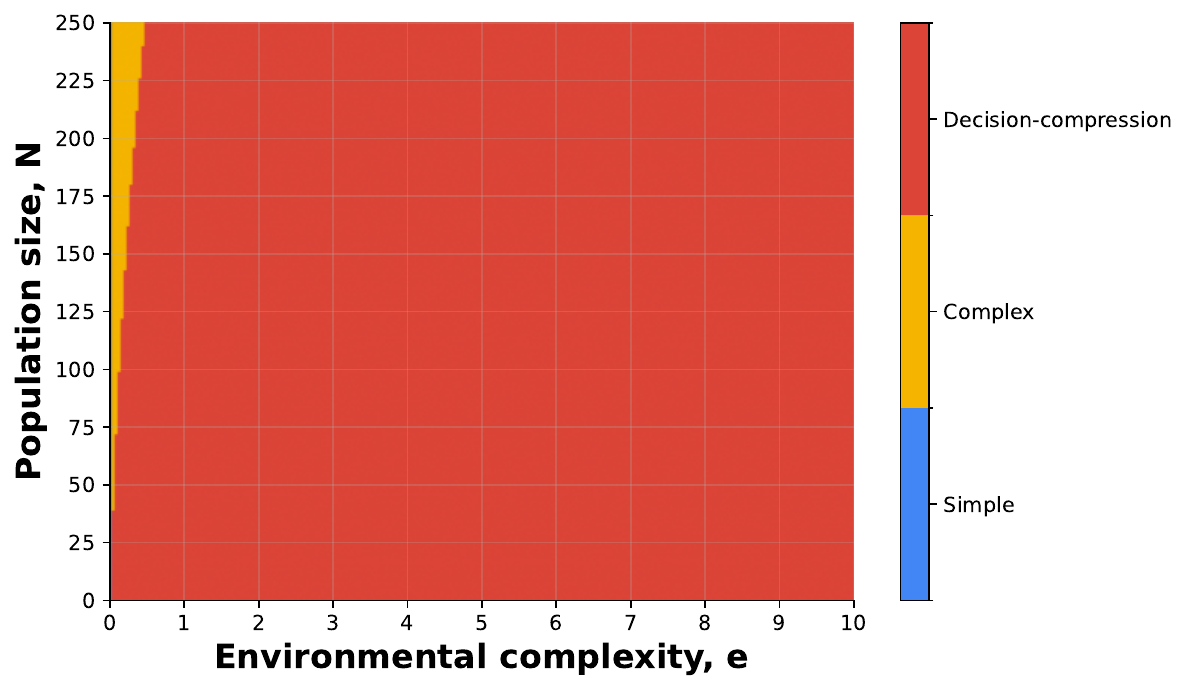}
        \caption{\(\lambda=0.005\)}
        \label{fig:sensitivity_lambda_0005}
    \end{subfigure}
    \hfill
    \begin{subfigure}[t]{0.48\textwidth}
        \centering
        \includegraphics[width=\textwidth]{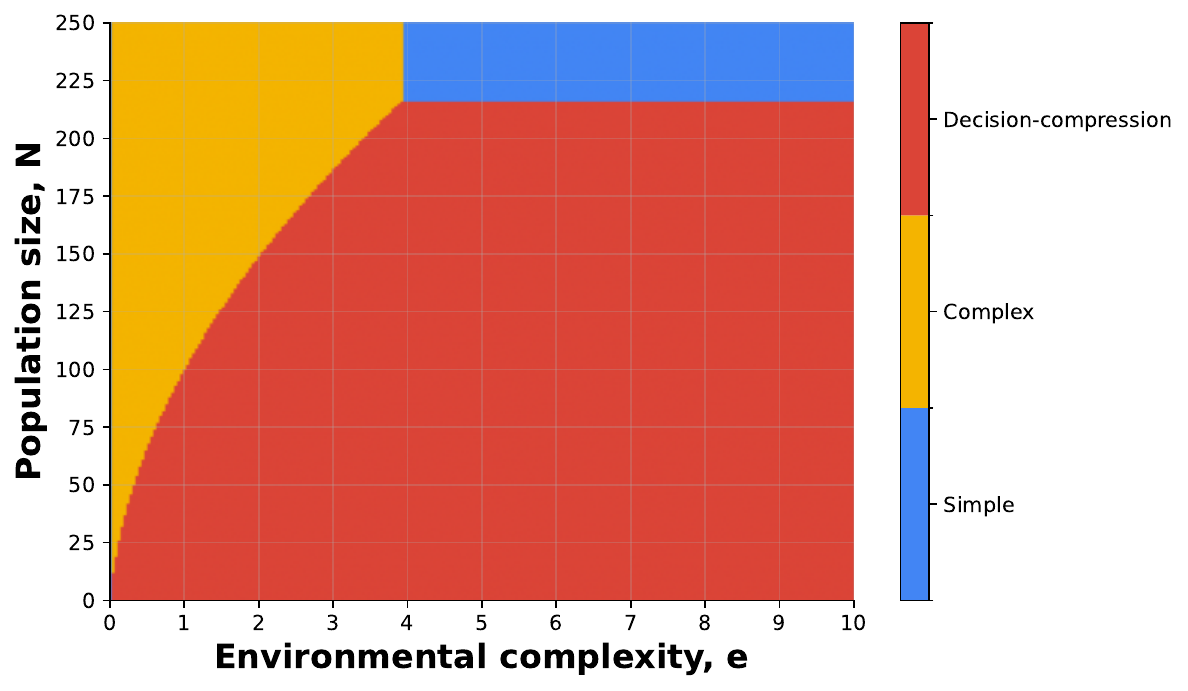}
        \caption{\(\lambda=0.015\)}
        \label{fig:sensitivity_lambda_0015}
    \end{subfigure}

    \vspace{0.4cm}

    \begin{subfigure}[t]{0.48\textwidth}
        \centering
        \includegraphics[width=\textwidth]{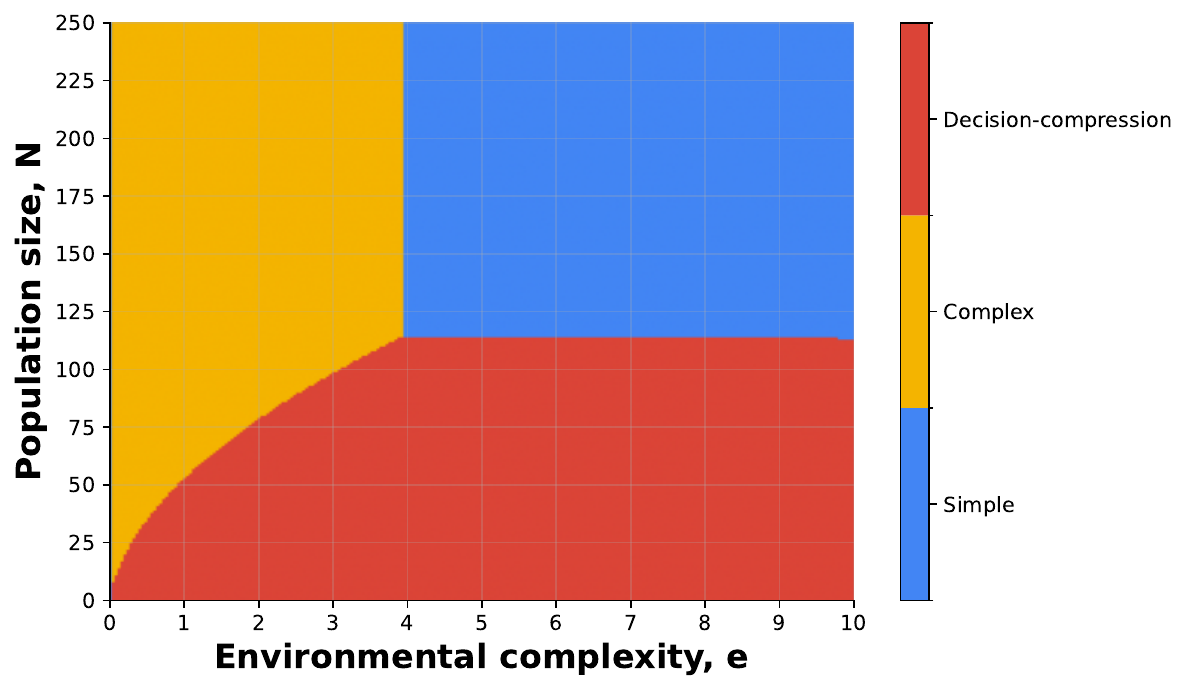}
        \caption{\(\lambda=0.025\)}
        \label{fig:sensitivity_lambda_0025}
    \end{subfigure}
    \hfill
    \begin{subfigure}[t]{0.48\textwidth}
        \centering
        \includegraphics[width=\textwidth]{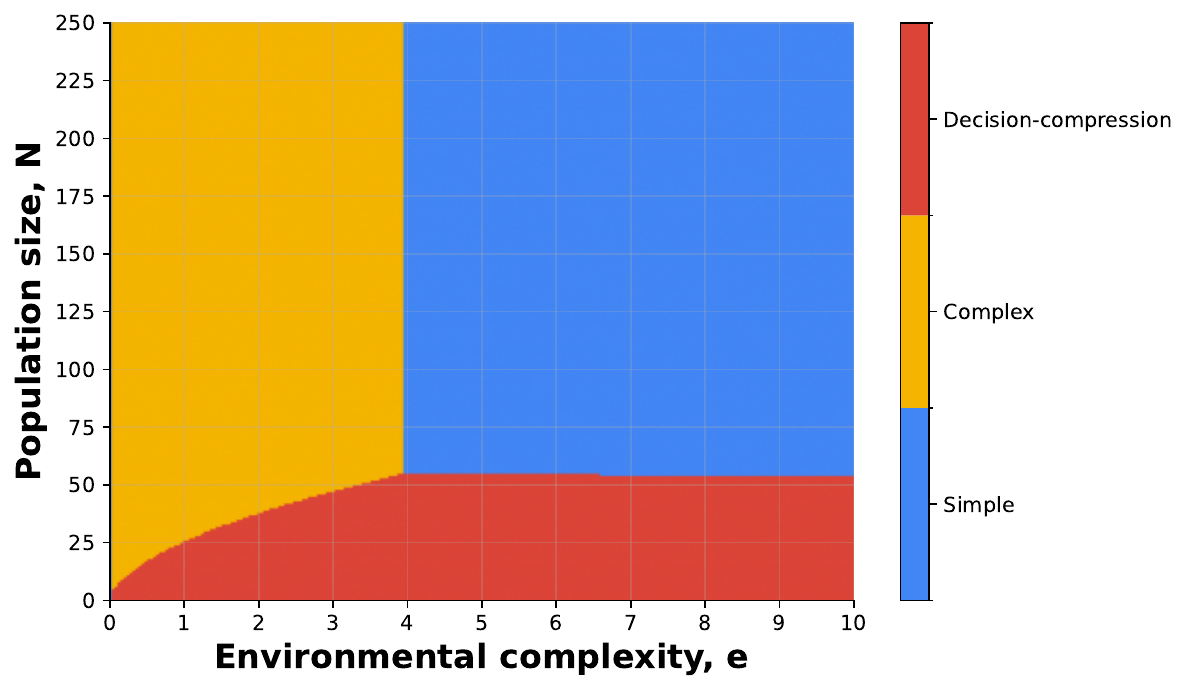}
        \caption{\(\lambda=0.045\)}
        \label{fig:sensitivity_lambda_0045}
    \end{subfigure}
    \caption{Sensitivity of dominance regions to delegation-loss severity. Each panel shows the welfare-maximizing social configuration in the \((e,N)\) plane for a different value of \(\lambda\).}
    \label{fig:sensitivity_lambda}
\end{figure}

\section{Discussion}
\label{sec:discussion}
In this study, we address the question \say{if information improves decisions, why do populations not evolve toward universal cognitive complexity?} by treating cognition as an economically costly input rather than as a free improvement in decision quality. The main outcome is that decision-compression can dominate both universal simplicity and universal complexity. Universal simplicity saves cognitive cost but loses informational value, whereas universal complexity preserves autonomy and high-quality decision-making but requires every agent to pay the full cost of information processing. A heterogeneous structure, in which most agents remain cognitively simple while a specialized decision-making centre processes complex information, can outperform both extremes when the cognitive costs saved by the population exceed delegation loss, institutional overhead, and the central decision-maker's cost.

The meaning of the model is therefore not that centralization is always efficient, nor that simplicity is always adaptive. Rather, the model identifies a general allocation problem: where should costly cognition be located in a society? This connects the paper to economic theories of organization in which firms, hierarchies, and institutions exist partly because market exchange and individual optimization are not costless \cite{Coase1937}. Transaction-cost economics similarly argues that organizational form depends on the comparative cost of coordinating activity through markets, contracts, or authority \cite{Williamson1975}. From this perspective, decision-compression institutions are not deviations from rationality but devices for economizing on the repeated cost of information acquisition and interpretation. They transform complex environments into simpler operational signals, much as organizations transform uncertainty into routines, commands, budgets, and procedures \cite{Arrow1974}.

The model also clarifies why social complexity need not require universal individual complexity. Large-scale systems often work because information is filtered, summarized, and routed through specialized roles. Economic systems differ not only in how much information they contain, but also in their architecture: who observes what, who processes what, and who communicates decisions to whom \cite{SahStiglitz1986}. Hierarchies can therefore be interpreted as information-processing structures that reduce the burden placed on each individual agent \cite{Radner1993}. In the present model, this logic appears formally in the comparison between duplicated cognition and centralized cognition. When many agents independently solve similar complex problems, society pays the same cognitive cost many times. Decision-compression reduces this duplication by concentrating costly processing in one role, while distributing simplified outputs to the rest of the population.

The numerical results support the analytical results but also refine them. Analytically, the dominance condition shows that decision-compression is beneficial when informational gains and cognitive savings exceed delegation and institutional costs. Numerically, the welfare curves show how this condition can emerge gradually as environmental complexity increases. At low levels of complexity, universal simplicity can perform well because little is lost by relying on low-cost cognition. At higher levels of complexity, however, the value of information increases, and simple decision-making becomes increasingly inadequate. Universal complexity may then become attractive, but only until the cost of duplicated cognition becomes too large. The numerical results therefore refine the theory by showing that decision-compression is most naturally interpreted as a middle regime: it is most valuable when information is sufficiently important to justify specialized processing, but cognitive costs are sufficiently high to make universal complexity inefficient. Moreover, the phase diagrams further refine the analytical story by showing that population size changes the relative attractiveness of the three social structures. In the analytical model, population size enters through both cognitive savings and delegation losses. The numerical investigation makes this trade-off visible. As \(N\) increases, the potential saving from avoiding duplicated complex cognition grows because more agents can rely on the same compressed output. At the same time, larger populations can increase communication burdens, monitoring costs, and distortion. This means that decision-compression is not simply a large-population result; it is a large-population result only when delegation and institutional overhead scale moderately. This point is consistent with organizational theories showing that decentralization and centralization each have costs, and that efficient design depends on the informational structure of the task \cite{MarschakRadner1972}. In addition, the role-choice analysis contributes a further implication: asymmetry in social roles can emerge even when agents are ex ante symmetric. No individual is assumed to be naturally destined to become the decision-maker. Instead, the first central role can be attractive because it captures concentrated benefits, while additional decision-makers reduce rents and create duplication or conflict. This resembles economic models in which organizational positions emerge endogenously from returns to specialization and coordination rather than from fixed biological or social differences \cite{Becker1992}. It also relates to theories of authority in which command structures persist because they reduce coordination costs, but become inefficient when authority is duplicated or contested \cite{MilgromRoberts1992}. Thus, the model provides a mechanism by which unequal cognitive roles can arise from initially symmetric agents.

This study is not without limitations. First, the proposed model represents cognition through two discrete modes, simple and complex, whereas real agents may differ continuously in ability, attention, experience, and access to information. Second, the central decision-maker is modeled as a single role, but real societies often contain multiple overlapping centres of expertise, including firms, professions, bureaucracies, markets, courts, media systems, and algorithmic platforms. Third, the model abstracts from dynamic learning. In reality, followers may become more capable over time, central institutions may improve or degrade, and delegation losses may depend on trust, legitimacy, and feedback. Fourth, the numerical investigation is illustrative rather than empirical. It shows that the theoretical conditions can generate meaningful regimes, but it does not estimate the size of cognitive costs, delegation losses, or institutional overhead in a specific population. Finally, the model does not fully address political economy problems such as rent extraction, capture, manipulation, and the strategic production of ignorance, all of which may cause decision-compression institutions to persist even when they are no longer socially efficient \cite{Stigler1971}.

Taken jointly, the analytical and numerical results support a broader interpretation of simplicity as an organized economic response to complexity. Populations do not need every agent to process all available information independently. They need mechanisms that determine when cognition should be individualized, when it should be centralized, and how complex outputs should be translated into forms that ordinary agents can use. Decision-compression is beneficial only under specific conditions, but those conditions are theoretically important: they show why simplified cognition, specialized expertise, hierarchy, and institutions can coexist with the economic value of information. The simplicity paradox is therefore not a contradiction. It is a consequence of cognitive economy. Namely, when information is valuable but costly, scalable societies may evolve not by making everyone complex, but by organizing complexity.

\section*{Declarations}
\subsection*{Funding}
This study received no funding. 

\subsection*{Conflicts of interest/Competing interests}
None.

\subsection*{Acknowledgement}
The author wishes to thank Labib Shami for helping shape the idea of this study and its presentation.

\subsection*{Usage of AI tools}
The author used AI tools for the manuscript preparation and for the numerical code development. All materials were manually reviewed, and the author takes full responsibility for the final content.

\bibliography{biblio}
\bibliographystyle{unsrt}

\end{document}